\theoremstyle{mytheoremstyle}
\newtheorem{theorem}{Theorem}
\title{MRT-based Joint Unicast and Multigroup Multicast Transmission in Massive MIMO Systems }
\name{Meysam Sadeghi$^{*}$\thanks{This work was supported in part by the Swedish Research Council (VR), the Swedish Foundation for Strategic Research (SSF), and ELLIIT. Also, it was supported by A*Star SERC project number 142-02-00043 and NSFC 61750110529.}, Emil Bj\"{o}rnson$^{\dagger}$, Erik~G.~Larsson$^{\dagger}$, Chau~Yuen$^{*}$, and~Thomas~L.~Marzetta$^{\ddagger}$}
\address{$^*$ Singapore University of Technology and Design, Singapore \\
$^\dagger$ Department of Electrical Engineering (ISY), Link\"{o}ping University, Link\"{o}ping, Sweden\\
$^\ddagger$ Department of Electrical and Computer Engineering, New York University,  New York, USA \\
Email: meysam@mymail.sutd.edu.sg, emil.bjornson@liu.se, erik.g.larsson@liu.se,\\ yuenchau@sutd.edu.sg, and tom.marzetta@nyu.edu}
\begin{document}
\ninept
\maketitle

\begin{abstract}
We study joint unicast and multigroup multicast transmission in single-cell massive multiple-input-multiple-output (MIMO) systems, under maximum ratio transmission. For the unicast transmission, the objective is to maximize the weighted sum spectral efficiency (SE) of the unicast user terminals (UTs) and for the multicast transmission the objective is to maximize the minimum SE of the multicast UTs. These two problems are coupled to each other in a conflicting manner, due to their shared power resource and interference. To address this, we formulate a multiobjective optimization problem (MOOP). We derive the Pareto boundary of the MOOP analytically and determine the values of the system parameters to achieve any desired Pareto optimal point. Moreover, we prove that the Pareto region is convex, hence the system should serve the unicast and multicast UTs at the same time-frequency resource.
\end{abstract}
\begin{keywords}
Massive MIMO, joint unicast and multicast transmission, multiobjective optimization, Pareto optimality.
\end{keywords}
\vspace{-0.5em}
\section{Introduction}
\vspace{-0.5em}
In this paper, we study the joint unicast and multigroup multicast transmission in massive multiple-input-multiple-output (MIMO) systems under maximum ratio transmission (MRT). We want to maximize the weighted sum spectral efficiency (SE) of unicast user terminals (UTs) and, at the same time, we want to maximize the minimum SE of multicast UTs, i.e., the so-called max-min fairness (MMF) problem.

A first seminal treatment of multicast transmission is presented in \cite{sidiropoulos2006transmit}, where it is proved that the problem is NP-hard and a semidefinite relaxation (SDR) based method is proposed. The extension of this problem to multigroup multicasting is then studied in \cite{karipidis2008quality} and \cite{christopoulos2014weighted}, where they also use the SDR technique. However, SDR-based techniques suffer from high computational complexity. Considering a single-cell system with an $N$ antenna base station (BS), the SDR-based methods have a complexity of $\mathcal{O}(N^{6.5})$ \cite{karipidis2008quality}, which is prohibitive for massive MIMO systems.

Due to high energy and spectral efficiencies of massive MIMO systems \cite{hoydis2013massive,ngo2013energy,LSAPowerNorm,sadeghiGC14austin}, there has been an increasing interest in presenting multicasting algorithms tailored for massive MIMO \cite{tran2014conic,MeysamMultiComplexity,sadeghi2015multi,Zhengzheng2014,GC2017,YangMulticat,MeysamMMFTWC}. To reduce the computational complexity of SDR-based multicasting methods, successive convex approximation is used in \cite{tran2014conic,MeysamMultiComplexity}, while assuming perfect channel state information (CSI) at the BS and UTs. To jointly address the imperfect CSI and the high computational complexity of massive MIMO multicasting, two different approaches are presented in \cite{Zhengzheng2014} and \cite{YangMulticat}. The first approach exploits the asymptotic orthogonality of the channels in massive MIMO systems to simplify the MMF problem \cite{Zhengzheng2014}. However, it requires an extremely large number of antennas for a reasonable performance, e.g., $N>1000$ \cite{sadeghi2015multi,GC2017}. The other approach, \cite{YangMulticat,GC2017,MeysamMMFTWC}, uses the statistical properties of massive MIMO systems and present practical and computationally efficient massive MIMO multicasting methods with reasonable number of BS antennas, e.g., $N > 50$ \cite{YangMulticat,MeysamMMFTWC}. 

The aforementioned results are for pure multicast transmission. But a practical system should be able to simultaneously serve both unicast and multicast UTs. This has motivated the study of joint unicast and multicast transmission, e.g., \cite{R1,R2,R3}. However, these works neither study massive MIMO systems nor account for imperfect CSI acquisition. The first step in this direction is presented in \cite{JBB}, where a joint beamforming and broadcasting technique for massive MIMO system is proposed, however, its extension to multigroup multicasting is an open problem.

To the best of our knowledge, the coexistence of unicast and multigroup multicast transmission in massive MIMO systems, while accounting for imperfect CSI and the computational complexity, has not been studied in the literature. Studying such a system is challenging because for multicast UTs usually the desired objective is to maximize their minimum SE \cite{tran2014conic,MeysamMultiComplexity,sadeghi2015multi,YangMulticat,GC2017,MeysamMMFTWC}, while for the unicast UTs it is usually to maximize their sum~SE~(SSE) \cite{SumRateMaxPowerAllTWC,SumRateTWC,SumPower}. Noting that we have a shared power, inter-user interference, and two conflicting objectives, one challenging aspect of such a problem is to find a rigorous definition of optimality. Moreover, for a given optimality measure, how should the resources be allocated to reach optimality? In this paper, we address these problems, while our system model accounts for channel estimation and pilot contamination.

\vspace{-.9em}

\section{System Model}
\vspace{-.5em}
We consider a single BS with $N$ antennas that jointly serves $U$ single-antenna unicast UTs and $G$ multicasting groups, where the $g$th multicasting group has $K_{g}$ single-antenna multicast UTs. We assume a UT in the system is either a unicast UT or is a multicast UT. Therefore, the total number of UTs is $U + \sum_{g=1}^{G} K_{g}$, which are arbitrarily distributed in the system. 

We consider a block fading channel model where the channels are static within a coherence interval of $T$ symbols, where $T = C_{B} C_{T}$ with $C_{B}$ and $C_{T}$ being the coherence bandwidth and coherence time, respectively. We denote the channel response between the unicast UT $u$ and the BS as $\mathbf{f}_{u}$ and the channel response between the multicast UT $k$ in group $g$ and the BS as $\mathbf{g}_{gk}$. We consider uncorrelated Rayleigh fading channel responses for UTs, i.e., $\mathbf{f}_{u} \sim \mathcal{CN}(\mathbf{0},\beta_{u} \mathbf{I}_{N})$ and $\mathbf{g}_{gk} \sim \mathcal{CN}(\mathbf{0},\eta_{gk} \mathbf{I}_{N})$, where $\beta_{u}$ and $\eta_{gk}$ are the large-scale fading coefficients for the unicast and multicast UTs, respectively. Note that the measurement campaigns have shown that the achievable rates under independent Rayleigh fading model well match the practically achievable rates \cite{GaoMeasurements}, and trends similar to those seen in independent Rayleigh channels are verified for the measured channels \cite{GaoMeasurements}.

The BS obtains CSI from uplink training using a time division duplexing protocol. In each coherence interval, we have two phases: Uplink pilot transmission and downlink data transmission. During uplink transmission, the UTs send uplink pilots, which enables the BS to estimate the channels. The BS uses these channel estimates to perform downlink precoding. As the coherence interval is limited and due to the large number of multicast UTs, we cannot assign a dedicated orthogonal pilot to each multicast UT in the system, since this will exhaust the resources. Therefore, for multicast UTs, we use the novel co-pilot assignment strategy proposed in \cite{YangMulticat}. This strategy assigns a shared pilot to all the UTs in each multicasting group. Hence, the BS only requires $U+G$ pilots (rather than $U + \sum_{j=1}^{G} K_{j}$ pilots) to simultaneously serve the $U$ unicast UTs and also all the multicast UTs in the system. 

Following the proposed scheme in \cite{YangMulticat}, the BS can estimate $\mathbf{f}_{m}$ by an MMSE estimation as follows:
\begin{align}
\hat{\mathbf{f}}_{u} = \dfrac{\sqrt{\tau p_{u}^{up}} \beta_{u}}{1 + \tau p_{u}^{up} \beta_{u}} \left( \sqrt{\tau p_{u}^{up}} \; \mathbf{f}_{u} + \mathbf{n}_{u} \right) 
\end{align}
with $\mathbf{n}_{u} \sim \mathcal{CN}(\mathbf{0},\mathbf{I}_{N})$ being additive noise and $p_{u}^{up}$ is proportional to the uplink pilot power that has been used by unicast UT $u$. Therefore, $\hat{\mathbf{f}}_{u} \sim \mathcal{CN}(\mathbf{0},\vartheta_{u} \mathbf{I}_{N})$ with $\vartheta_{u} = \tau p_{u}^{up} \beta_{u}^{2}/ \left(1 + \tau p_{u}^{up} \beta_{u} \right)$. We stack the estimated channel vectors between the BS and the $U$ unicast UTs in an $N \times U$ matrix $\hat{\mathbf{F}}= [\hat{\mathbf{f}}_{1}, \ldots,\hat{\mathbf{f}}_{U}]$. The MMSE estimate of $\mathbf{g}_{gk}$ is given as 
\begin{align}
\label{est.mu}
\hat{\mathbf{g}}_{gk} = \dfrac{\sqrt{\tau q_{gk}^{up}} \; \eta_{gk}}{1 + \sum_{t=1}^{K_{g}} \tau q_{gt}^{up} \eta_{gt}}  \left( \sum_{t=1}^{K_{g}} \sqrt{ \tau q_{gt}^{up} } \; \mathbf{g}_{gt} + \mathbf{n}_{g}  \right)
\end{align}
where $\mathbf{n}_{g} \sim \mathcal{CN}(\mathbf{0},\mathbf{I}_{N})$ is the additive noise and $q_{gk}^{up}$ is proportional to the uplink pilot power that has been used by multicast UT $k$ of multicasting group $g$. Therefore $\hat{\mathbf{g}}_{gk} \! \sim \mathcal{CN}(\mathbf{0},\xi_{gk} \mathbf{I}_{N})$ with $\xi_{gk} \!\! = \!\!  \tau q_{gk}^{up} \; \eta_{gk}^{2}/  \left(1 \!+\! \sum_{t=1}^{K_{g}} \tau q_{gt}^{up} \eta_{gt} \right)$. Given \eqref{est.mu}, it is obvious that for every multicast group $g$ the channels estimates of the multicast UTs are equivalent up to a scalar coefficient. We can also estimate the composite channel of all multicast UTs within group $g$ as $\mathbf{g}_{g} = \sum_{t=1}^{K_{g}} \sqrt{ \tau q_{gt}^{up} } \; \mathbf{g}_{gt}$ and we have
\begin{align}
\label{ghat}
\hat{\mathbf{g}}_{g} =  \dfrac{ \sum_{t=1}^{K_{g}} \tau q_{gt}^{up} \eta_{gt}}{1 + \sum_{t=1}^{K_{g}} \tau q_{gt}^{up} \eta_{gt}}   \left( \sum_{t=1}^{K_{g}} \sqrt{ \tau q_{gt}^{up} } \; \mathbf{g}_{gt} + \mathbf{n}_{g} \right)
\end{align}
where $\hat{\mathbf{g}}_{g} \! \sim \mathcal{CN}(\mathbf{0}, \gamma_{g} \mathbf{I}_{N})$ with $\gamma_{g} \!=\! \dfrac{ \left( \sum_{t=1}^{K_{g}} \tau q_{gt}^{up} \eta_{gt} \right)^{2} }{1 \!+\! \sum_{t=1}^{K_{g}} \tau q_{gt}^{up} \eta_{gt}} $. We stack the $G$ composite channel vectors in an $N \times G$ matrix $\hat{\mathbf{G}} = [\hat{\mathbf{g}}_{1},\ldots,\hat{\mathbf{g}}_{G}]$. Note that $\hat{\mathbf{g}}_{gk}$ and $\hat{\mathbf{g}}_{g}$ are equal (to within a multiplicative constant) as follows:
\begin{align}
\label{ghat.to.ghatjk}
\hat{\mathbf{g}}_{gk} = \dfrac{\sqrt{\tau q_{gk}^{up}} \; \eta_{gk}}{\sum_{t=1}^{K_{g}} \tau q_{gt}^{up} \eta_{gt}} \hat{\mathbf{g}}_{g}.
\end{align}

\subsection{Downlink Transmission}
Let us denote the data symbols for the $U$ unicast UTs as $\mathbf{x}=[x_{1},\ldots,x_{U}]^{T}$, where $\mathbf{x} \sim \mathcal{CN}(\mathbf{0},\mathbf{I}_{U})$. We denote the data symbols for the $G$ multicasting groups as $\mathbf{s}=[s_{1},\ldots,s_{G}]^{T}$, where $\mathbf{s} \sim \mathcal{CN}(\mathbf{0},\mathbf{I}_{G})$. Moreover, we assume $\mathbf{x}$ and $\mathbf{s}$ are independent. Then the signal received by $m$th unicast UT is
\begin{align}
\label{unicast-RXsignal}
y_{m} = \mathbf{f}_{m}^{H} \left( \mathbf{V} \mathbf{x} + \mathbf{W} \mathbf{s} \right) + n_{m}
\end{align}
where $n_{m} \sim \mathcal{CN}(0,1)$ is the normalized additive noise, $\mathbf{V} = [\mathbf{v}_{1},\ldots,\mathbf{v}_{U}]$ is the $N \times U$ unicast precoding matrix with $\mathbf{v}_{m}$ being the precoding vector of $m$th unicast UT, and $\mathbf{W}=[\mathbf{w}_{1},\ldots,\mathbf{w}_{G}]$ is the $N \times G$ multicast precoding matrix with $\mathbf{w}_{j}$ being the precoding vector of $j$th multicasting group. The received signal by $k$th multicast UT in $j$th multicasting group is
\begin{align}
\label{multicast-RXsignal}
z_{jk} = \mathbf{g}_{jk}^{H} \left( \mathbf{W} \mathbf{s} + \mathbf{V} \mathbf{x} \right) + n_{jk}
\end{align}
where $n_{jk} \sim \mathcal{CN}(0,1)$ is the normalized additive noise. Therefore the precoding matrix becomes an $N \times (U+G)$ matrix $[\mathbf{V},\mathbf{W}]$. We will detail the structure of $\mathbf{V}$ and $\mathbf{W}$ next.

\vspace{-0.5em}
\section{Achievable Spectral Efficiencies with MRT}
\vspace{-0.5em}
Using MRT, the precoding vector for $m$th unicast UT is
\begin{align}
\label{MRT-unicast}
\mathbf{v}_{m}^{\mathrm{MRT}}=\sqrt{\frac{p_{m}^{dl}}{N \vartheta_{m}}} \;\; \hat{\mathbf{f}}_{m}  
\end{align}
where $p_{m}^{dl}$ is the downlink power of the unicast precoding vector, i.e., $\mathbb{E}[\Vert \mathbf{v}_{m} \Vert^{2}] = p_{m}^{dl}$. The precoding vector for $j$th multicast group is
\begin{align}
\label{MRT-multicast}
\mathbf{w}_{j}^{\mathrm{MRT}} = \sqrt{\frac{q_{j}^{dl}}{N \gamma_{j}}} \;\; \hat{\mathbf{g}}_{j}
\end{align}
where $q_{j}^{dl}$ is the downlink power of the multicast precoding vector, i.e., $\mathbb{E}[\Vert \mathbf{w}_{j} \Vert^{2}] = q_{j}^{dl}$. We assume the total downlink available power at the BS is equal to $P$. Then the power allocated to the downlink unicast and multicast precoding vectors should meet the following condition:
\begin{align}
\label{TotalPower}
P_{un} + P_{mu} \leq P
\end{align}
where $ P_{un}=\sum_{m=1}^{U} p_{m}^{dl}$ and $ P_{mu} = \sum_{j=1}^{G} q_{j}^{dl}$ are the precoding powers used for unicast and multicast transmissions at the BS, respectively.


Consider the unicast UTs. Starting from \eqref{unicast-RXsignal}, we can write the signal received by $m$th unicast UT as follows
\begin{align}
\label{uatf}
y_{m} = \mathbb{E}[\mathbf{f}_{m}^{H}  \mathbf{v}_{m}] x_{m} + (\mathbf{f}_{m}^{H}  \mathbf{v}_{m} - \mathbb{E}[\mathbf{f}_{m}^{H}  \mathbf{v}_{m}])  x_{m}  \notag\\
+  \sum_{u=1, u \neq m}^{U}  \mathbf{f}_{m}^{H}  \mathbf{v}_{u} x_{u} + \sum_{g=1}^{G} \mathbf{f}_{m}^{H} \mathbf{w}_{g} s_{g}  + n_{m}.
\end{align}
Now by applying a common bounding technique \cite[Sec. 2.3]{marzetta2016fundamentals}, which treats the first term of \eqref{uatf} as the desired signal and the remaining terms as effective noise, the following SE is achievable for $m$th unicast UT
	\begin{align}
	\label{SE.MRT.un}
	\mathrm{SE}_{m,un} &= \left(1 - \dfrac{\tau}{T} \right) \log_{2}(1+\mathrm{SINR}_{m,un})
	\\
	\label{SINR.MRT.un}
	\mathrm{SINR}_{m,un} &= \dfrac{N p_{m}^{dl} \vartheta_{m} }{1  + \beta_{m} ( P_{un} + P_{mu})}.
	\end{align}
Also, the following SE is achievable for multicast UT $k$ of group $j$
	\begin{align}
	\label{SE.MRT.mu}
	\mathrm{SE}_{jk,mu} &= \left(1 - \dfrac{\tau}{T} \right) \log_{2}(1+\mathrm{SINR}_{jk,mu})
	\\
	\label{SINR.MRT.mu}
	\mathrm{SINR}_{jk,mu} &= \dfrac{N  q_{j}^{dl}  \xi_{jk}  }{1  + \eta_{jk} (P_{mu} + P_{un})}.
	\end{align}

\vspace{-1.5em}
\section{Optimal Resource Allocation}
\vspace{-0.5em}
Considering our proposed single cell joint unicast and multi-group multicast system, the MMF problem for the multicast UTs can be stated as
\begin{align}
\label{MMF}
\mathcal{P}1: \underset{\{q_{j}^{dl}\},\{q_{jk}^{up}\}, \tau}{\text{maximize}} \; \underset{j \in \mathcal{G}, k \in \mathcal{K}_{j}}{\text{min}} \;\;& \mathrm{SE}_{jk,mu}
\\
s.t. \; & P_{mu} + P_{un} \leq P  \tag{\ref{MMF}-i}
\\
& 0 \leq q_{j}^{dl}   \tag{\ref{MMF}-ii}
\\
& 0 \leq \tau q_{jk}^{up} \leq E_{jk} \tag{\ref{MMF}-iii}
\\
& \tau \in \{U\!\!+\!G,\ldots,T \} \tag{\ref{MMF}-iv}
\end{align}
where $P$ is the total downlink power at the BS and $P_{un}$ is assumed to be a fixed known quantity. Also $E_{jk}$ is the maximum energy limit of the multicast UT $k$ in group $j$ per pilot transmission. Hereafter we denote an objective value of $\mathcal{P}1$ that is obtained for a set of feasible decision variables of $\mathcal{P}1$ by $O_{mu}$, and we denote its optimal objective value by $O_{mu}^{*}$.

\begin{theorem}
	\label{multicastOpt}
	For a fixed value of $P_{un}$ where $0 \leq P_{un} \leq P$, at the optimal solution of $\mathcal{P}1$ all the multicast UTs will have equal SEs, i.e., $O_{mu}^{*}(P_{un}) = \mathrm{SE}_{jk} \; \forall j,k$, which is the optimal objective value of $\mathcal{P}1$ and it is equal to 
	\begin{align}
	\label{O1}
	&O_{mu}^{*} (P_{un}) = \left( 1 - \dfrac{U\!\!+\!G}{T} \right) \times
	\\
	&\log_{2} \left( 1 + \dfrac{P_{mu} N}{P \sum_{j=1}^{G} K_{j}  + \sum_{j=1}^{G} \frac{1}{\Upsilon_{j}}  + \sum_{j=1}^{G} \sum_{t=1}^{K_{j}} \frac{1}{\eta_{jt}}} \right)   \notag 
	\end{align}
	where $\Upsilon_{j} = \min_{k \in \mathcal{K}_{j}} \frac{E_{jk} \eta_{jk}^{2}}{1+\eta_{jk}P}$ and $P_{mu} = P - P_{un}$. Also the optimal values of decision variables are: $\tau^{*} = U\!\!+\!G$, $q_{jk}^{up*} = \frac{1+ \eta_{jk}P}{(U\!\!+\!G)\eta_{jk}^{2}} \Upsilon_{j}$, $q_{j}^{dl*} = \frac{\Gamma}{N \Upsilon_{j}}  (1 + \sum_{t=1}^{K_{j}} x_{jt}^{*} \eta_{jt})$ with $x_{jk}^{*} = \frac{1+ \eta_{jk}P}{\eta_{jk}^{2}} \Upsilon_{j}$.
\end{theorem}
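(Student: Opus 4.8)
The plan is to decouple $\mathcal{P}1$ into a nested sequence of optimizations---over the pilot length $\tau$, the total multicast power, the within-group uplink energies, and the cross-group downlink powers---each solvable in closed form by exploiting the multiplicative structure of the SINR in \eqref{SINR.MRT.mu}. First I would change variables to the per-pilot energies $x_{jk}:=\tau q_{jk}^{up}$, so that $\xi_{jk}$ and hence $\mathrm{SINR}_{jk,mu}$ depend on the uplink only through $x_{jk}$, while constraint (\ref{MMF}-iii) becomes simply $0\le x_{jk}\le E_{jk}$ with no $\tau$. Since the prelog factor $(1-\tau/T)$ is strictly decreasing in $\tau$ and the SINR no longer involves $\tau$ once the $x_{jk}$ are fixed, the smallest admissible pilot length is optimal, giving $\tau^{*}=U\!+\!G$ and $q_{jk}^{up*}=x_{jk}^{*}/(U\!+\!G)$.

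Next I would show that full power is used, $P_{mu}=P-P_{un}$. Fix any feasible uplink and scale all downlink powers $q_j^{dl}\mapsto c\,q_j^{dl}$ with $c>1$. Writing $P_{\mathrm{tot}}=P_{un}+P_{mu}$, each SINR is multiplied by $c(1+\eta_{jk}P_{\mathrm{tot}})/(1+\eta_{jk}(P_{un}+cP_{mu}))$; a one-line manipulation shows the numerator minus denominator equals $(c-1)(1+\eta_{jk}P_{un})>0$, so every SINR, and thus the minimum, strictly increases. Hence at the optimum $P_{\mathrm{tot}}=P$, which fixes all denominators to $1+\eta_{jk}P$ and makes $\Upsilon_j$ a well-defined constant.

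The within-group step is where I expect the main obstacle. For a fixed group $j$ the common downlink power $q_j^{dl}$ factors out, and the per-UT SINR is proportional to $b_{jk}:=x_{jk}\eta_{jk}^2/(1+\eta_{jk}P)$ over the \emph{shared} denominator $1+\sum_t x_{jt}\eta_{jt}=1+\sum_t b_{jt}(1/\eta_{jt}+P)$. The difficulty is that this co-pilot denominator couples all UTs in the group---raising one UT's pilot energy improves only its own estimate yet inflates everyone's interference term---so the optimum is not an interior stationary point. I would instead bound the minimum: for any allocation $\min_k b_{jk}\le b_{\min}$ and the denominator is at least $1+b_{\min}\sum_t(1/\eta_{jt}+P)$, so the group objective is at most $b_{\min}/(1+b_{\min}\sum_t(1/\eta_{jt}+P))$, an increasing function of $b_{\min}$ with $b_{\min}\le\Upsilon_j$. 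Equality is attained by $b_{jk}=\Upsilon_j$ for all $k$, i.e. $x_{jk}^{*}=(1+\eta_{jk}P)\Upsilon_j/\eta_{jk}^2$, which equalizes the UTs inside the group and shows the optimum is dictated by the weakest UT.

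Finally, the cross-group step. Substituting $x_{jk}^{*}$ collapses each group to a single SINR $N q_j^{dl}\Upsilon_j/(1+\Upsilon_j(\sum_t 1/\eta_{jt}+K_jP))$ that is linear and increasing in $q_j^{dl}$, with the groups coupled only through $\sum_j q_j^{dl}=P_{mu}$. A standard max-min exchange argument (transfer power from any higher-SINR group to a lower one) forces all group SINRs equal to a common value $\Gamma$ at optimality, yielding $q_j^{dl*}=\Gamma(1+\sum_t x_{jt}^{*}\eta_{jt})/(N\Upsilon_j)$. Imposing the sum constraint and simplifying $\sum_j(1/\Upsilon_j+\sum_t 1/\eta_{jt}+K_jP)$ gives $\Gamma$ and hence the closed form $O_{mu}^{*}(P_{un})$ in \eqref{O1}; the equal-SE conclusion and the stated optimizers follow directly.
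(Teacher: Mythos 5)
The paper states Theorem~\ref{multicastOpt} without proof (this conference version omits the proofs of Theorems~\ref{multicastOpt} and~\ref{unicastOpt}), so there is no paper argument to compare against; judged on its own, your proof is correct and complete. Each of your four steps checks out: (i) the substitution $x_{jk}=\tau q_{jk}^{up}$ cleanly decouples $\tau$ from the SINR, and since the prelog is decreasing and the log-factor is nonnegative, $\tau^{*}=U\!+\!G$; (ii) the scaling identity $c\,(1+\eta_{jk}P_{\mathrm{tot}})-(1+\eta_{jk}(P_{un}+cP_{mu}))=(c-1)(1+\eta_{jk}P_{un})>0$ is right and forces the power constraint to be tight (with the degenerate case $P_{mu}=0$ handled trivially); (iii) in the within-group step, writing $\mathrm{SINR}_{jk}=Nq_j^{dl}b_{jk}/\bigl(1+\sum_t b_{jt}(1/\eta_{jt}+P)\bigr)$ with $b_{jk}=x_{jk}\eta_{jk}^2/(1+\eta_{jk}P)$ is exactly the right normalization, and your bound via $b_{\min}$ together with monotonicity of $b\mapsto b/(1+bS)$ pins the group optimum at $b_{jk}=\Upsilon_j$, which is feasible by the definition of $\Upsilon_j$ as the minimum of the box constraints; (iv) the cross-group equalization and the sum constraint reproduce $\Gamma=NP_{mu}/\bigl(P\sum_jK_j+\sum_j1/\Upsilon_j+\sum_j\sum_t1/\eta_{jt}\bigr)$, i.e.\ \eqref{O1}, and the stated $q_j^{dl*}$, $x_{jk}^*$, $q_{jk}^{up*}$. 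One small refinement worth making explicit: the theorem asserts that \emph{every} optimal solution equalizes the SEs, and your argument delivers this too, because equality in your within-group bound forces $b_{jt}=b_{\min}=\Upsilon_j$ for all $t$, and the exchange argument forces all group SINRs equal; stating these equality conditions closes that claim rigorously.
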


For unicast UTs we can optimize the sum of the weighted SEs of the unicast UTs, which is a common metric of interest in unicast transmission \cite{SumRateMaxPowerAllTWC,LowComSumRateMassiveMIMO,SumRateTWC,SumPower,SumRateMaxTSP}. Given our proposed single cell joint unicast and multi-group multicast system, the SSE problem for unicast UTs is 
\begin{align}
\label{sumrate}
\mathcal{P}2: \underset{ \{ p_{m}^{dl} \}, \{ p_{m}^{u} \}, \tau }{\text{maximize}}  \;\;&  \sum_{m=1}^{U} \alpha_{m} \mathrm{SE}_{m,un}
\\
s.t. \;\; & P_{un} \leq P - P_{mu} \tag{\ref{sumrate}-i}
\\
& 0 \leq p_{m}^{dl}   \tag{\ref{sumrate}-ii}
\\
& 0 \leq \tau p_{m}^{up} \leq E_{m}      \tag{\ref{sumrate}-iii}
\\
& \tau \in \{U\!\!+\!G,\ldots,T \}        \tag{\ref{sumrate}-iv}
\end{align}
where $E_{m}$ is maximum energy limit of the $m$th unicast UT per pilot transmission and $\alpha_{m}$ is the weight assigned to this user's SE. Hereafter we denote an objective value of $\mathcal{P}2$ that is obtained for a set of feasible decision variables of $\mathcal{P}2$ by $O_{un}$ and we denote its optimal objective value by $O_{un}^{*}$. Now we have the following result.

\begin{theorem}
	\label{unicastOpt}
	For a fixed value of $P_{mu}$ and $0 \leq P_{mu} \leq P$, the optimal solution to problem $\mathcal{P}2$ is $\tau^{*} =U+G$, $p_{m}^{up*} = \frac{E_{m}}{U+G}$, and 
	\begin{align}
	\label{P3Sol}
	p_{m}^{dl*} = \max \left\lbrace  0,\dfrac{\alpha_{m}}{\nu \ln 2 } - \dfrac{1+\beta_{m}P}{N \vartheta_{m}^{*}} \right\rbrace
	\end{align}
	where $\vartheta_{m}^{*}= \frac{E_{m} \beta_{m}^{2}}{1+E_{m} \beta_{m}}$ and $\nu$ is selected to satisfy $P_{un} = P - P_{mu}$. The optimal objective value of $\mathcal{P}2$ becomes
	\begin{align}
	\label{O2}
	O_{un}^{*}(P_{\!mu}) \!=\!  \left(1\! -\! \frac{U\!+\!G}{T}\right)  \sum_{m=1}^{U} \! \alpha_{m} \! \log_{2} \left(1 \!+\! \dfrac{N p_{m}^{dl*} \vartheta_{m}^{*} }{1 \! + \! \beta_{m} P} \right).
	\end{align}
\end{theorem}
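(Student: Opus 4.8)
The plan is to exploit the separable structure of the objective $\sum_{m} \alpha_{m}\, \mathrm{SE}_{m,un}$ and optimize the pilot variables, the training length $\tau$, and the downlink powers $\{p_{m}^{dl}\}$ in that order. First I would observe from \eqref{SINR.MRT.un} that each $\mathrm{SINR}_{m,un}$, and hence the whole objective, is strictly increasing in $\vartheta_{m}$, and that $\vartheta_{m} = \tau p_{m}^{up}\beta_{m}^{2}/(1+\tau p_{m}^{up}\beta_{m})$ depends on the pilot variables only through the product $\tau p_{m}^{up}$ and is increasing in it. Since constraint (\ref{sumrate}-iii) bounds exactly this product by $E_{m}$, the optimum sets $\tau p_{m}^{up} = E_{m}$ for every $m$, giving $\vartheta_{m} = \vartheta_{m}^{*} = E_{m}\beta_{m}^{2}/(1+E_{m}\beta_{m})$ \emph{independently} of $\tau$; this is feasible for any admissible $\tau$ because (\ref{sumrate}-iii) constrains only the product and not $p_{m}^{up}$ alone. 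With $\vartheta_{m}$ pinned to $\vartheta_{m}^{*}$, the objective factors as $(1-\tau/T)$ times a term that no longer depends on $\tau$; as this prefactor is decreasing in $\tau$ over the discrete set (\ref{sumrate}-iv), the smallest admissible value $\tau^{*}=U+G$ is optimal, which together with $\tau p_{m}^{up}=E_{m}$ yields $p_{m}^{up*}=E_{m}/(U+G)$.

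The hard part will be the downlink allocation $\{p_{m}^{dl}\}$ subject to $\sum_{m} p_{m}^{dl}\le P-P_{mu}$ and $p_{m}^{dl}\ge 0$, because $P_{un}=\sum_{m}p_{m}^{dl}$ enters the interference term $1+\beta_{m}(P_{un}+P_{mu})$ of \emph{every} user. Thus raising one user's power simultaneously helps its own signal and hurts everyone through the common denominator, so it is not a priori clear that full power is optimal, nor that the objective is concave. I would resolve this with a scaling argument: for any nonzero feasible allocation, replacing $p_{m}^{dl}$ by $t\,p_{m}^{dl}$ with $t>1$ leaves each SINR of the form $ta/(b+td)$ with $a,b,d>0$, which is strictly increasing in $t$. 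Hence scaling every power up to the boundary $\sum_{m}p_{m}^{dl}=P-P_{mu}$ strictly increases all SINRs and therefore the objective; since the all-zero point is trivially dominated, the optimum must use full power, whence $P_{un}+P_{mu}=P$ and the interference term collapses to the constant $1+\beta_{m}P$.

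Once the denominator is the constant $1+\beta_{m}P$, each $\mathrm{SINR}_{m,un}$ is linear in $p_{m}^{dl}$ and the objective becomes a sum of concave logarithms over the simplex $\{\sum_{m}p_{m}^{dl}=P-P_{mu},\ p_{m}^{dl}\ge 0\}$, so the KKT conditions are necessary and sufficient. Introducing a multiplier $\nu$ for the power equality and enforcing nonnegativity by complementary slackness, the stationarity condition $\frac{\alpha_{m}}{\ln 2}\,\frac{N\vartheta_{m}^{*}}{(1+\beta_{m}P)+N p_{m}^{dl}\vartheta_{m}^{*}}=\nu$ solves to the water-filling form \eqref{P3Sol}, with $\nu$ fixed by $\sum_{m}p_{m}^{dl*}=P-P_{mu}$. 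Finally, substituting $\tau^{*}=U+G$, $\vartheta_{m}^{*}$, the constant denominator $1+\beta_{m}P$, and $p_{m}^{dl*}$ back into $\sum_{m}\alpha_{m}\,\mathrm{SE}_{m,un}$ yields the claimed optimal value \eqref{O2}, completing the argument.
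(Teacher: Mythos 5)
Your proof is correct. Note that the paper states Theorem~\ref{unicastOpt} without giving a proof, so there is nothing to compare line by line; but your argument is precisely the one the stated solution presupposes: saturate the pilot-energy constraint $\tau p_{m}^{up}=E_{m}$ (which pins $\vartheta_{m}$ at $\vartheta_{m}^{*}$ independently of $\tau$), take the smallest admissible pilot length $\tau^{*}=U+G$, show full downlink power is optimal so the common denominator collapses to $1+\beta_{m}P$, and then solve the now-concave problem by KKT to obtain the water-filling form \eqref{P3Sol} with $\nu$ meeting the power budget. The one step that genuinely needs care is full-power optimality, since $P_{un}$ sits in every user's interference term and concavity is not available beforehand; your scaling argument, that replacing $p_{m}^{dl}$ by $t\,p_{m}^{dl}$ turns each SINR into $ta/(b+td)$, which is strictly increasing in $t$, handles this cleanly and rigorously.
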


\vspace{-0.5em}
\section{The Pareto Region}
\vspace{-0.5em}
Based on Theorems \ref{multicastOpt} and \ref{unicastOpt}, it is obvious that $O_{mu}^{*}(P_{un})$ and $O_{un}^{*}(P_{mu})$ are coupled in a conflicting manner, i.e., $P=P_{un}+P_{mu}$, such that improving $O_{mu}^{*}(P_{un})$ would degrade $O_{un}^{*}(P_{mu})$, and vice versa. Therefore we need to consider a multi-objective optimization framework for the considered system that describes the jointly achievable operating points. This description enables us to balance the two conflicting objectives efficiently. To present such a description we need to state a MOOP and derive its Pareto boundary \cite{bjornson2014multiobjective,marler2004survey}. The MOOP for the proposed joint unicast and multigroup multicast massive MIMO system is given by \cite{marler2004survey}
\begin{align}
\label{MOOP}
\mathcal{M}: \underset{\mathbf{x}}{\text{maximize}} &\quad [O_{mu}(\mathbf{x}), O_{un}(\mathbf{x})]^{T}
\\
s.t. &\quad    \mathbf{x} \in  \mathcal{X}    \tag{\ref{MOOP}-i}
\end{align}
where $\mathbf{x} =  ( \{ q_{j}^{dl} \}, \{ q_{jk}^{up} \},  \{ p_{m}^{dl} \}, \{ p_{m}^{up} \}, \tau )$, and $\mathcal{X}$ is the so called resource bundle \cite{bjornson2014multiobjective}, which is achieved by bundling the feasible space of $\mathcal{P}1$ and $\mathcal{P}2$ as follows:\\
$\mathcal{X} = \{    ( \{ q_{j}^{dl} \}, \{ q_{jk}^{up} \},  \{ p_{m}^{dl} \}, \{ p_{m}^{up} \}, \tau )  \; | \;  0 \leq q_{j}^{dl}, 0 \leq \tau q_{jk}^{up} \leq E_{jk}, 0 \leq p_{m}^{dl}, \; 0 \leq \tau p_{m}^{up} \leq E_{m}, P_{un} + P_{mu} \leq P, \tau \in \{ U+G,\ldots,T\} \}$. 

Note the difference between $O_{mu}(\mathbf{x})$ and $O_{mu}^{*}(P_{un})$. $O_{mu}(\mathbf{x})$ is the objective achieved for $\mathcal{P}1$ given $\mathbf{x} \in \mathcal{X}$, while $O_{mu}^{*}(P_{un})$ is the optimal objective achieved for $\mathcal{P}1$ based on Theorem \ref{multicastOpt}, given the unicast power is fixed to $P_{un}$. Similarly, $O_{un}(\mathbf{x})$ is the objective achieved for $\mathcal{P}2$ given $\mathbf{x} \in \mathcal{X}$, while $O_{un}^{*}(P_{mu})$ is the optimal objective achieved for $\mathcal{P}2$ based on Theorem \ref{unicastOpt}, given the multicast power is fixed to $P_{mu}$.

The so-called \textit{attainable objective set} of the MOOP $\mathcal{M}$, given in \eqref{MOOP}, is \cite{bjornson2014multiobjective}
\begin{align}
\mathcal{S} = \{(O_{mu}(\mathbf{x}), O_{un}(\mathbf{x})) | \mathbf{x} \in \mathcal{X} \}.
\end{align}
The strong Pareto boundary of $\mathcal{M}$ is a set $\mathcal{B}_{s}$ containing all the tuples $ (O_{mu}(\mathbf{x}^{*}), O_{un}(\mathbf{x}^{*}))$ such that $\mathbf{x}^{*} \in \mathcal{X}$ and $\nexists \mathbf{y} \in \mathcal{X}$ for which either $ O_{mu}(\mathbf{x}^{*}) < O_{mu}(\mathbf{y})$ and $O_{un}(\mathbf{x}^{*})\leq O_{un}(\mathbf{y})$, or $ O_{mu}(\mathbf{x}^{*}) \leq O_{mu}(\mathbf{y})$ and $O_{un}(\mathbf{x}^{*}) <  O_{un}(\mathbf{y})$. In this case $\mathbf{x}^{*}$ is called a Pareto optimal point \cite{marler2004survey}. Moreover, a point $\mathbf{z}^{*} \in \mathcal{X}$ is called a weak Pareto optimal point if $\nexists \mathbf{y} \in \mathcal{X}$ such that $O_{mu}(\mathbf{z}^{*}) < O_{mu}(\mathbf{y}) , O_{un}(\mathbf{z}^{*}) < O_{un}(\mathbf{y})$ \cite{marler2004survey}. Note that every strong Pareto optimal point is also a weak Pareto optimal point, but the converse is not true. The set $\mathcal{B}_{w}$ that contains all the tuples $(O_{mu}(\mathbf{z}^{*}), O_{un}(\mathbf{z}^{*}))$ where $\mathbf{z}^{*}$ is a weak Pareto optimal point is called the weak Pareto boundary \cite{bjornson2014multiobjective,marler2004survey}. We have the following theorem for the Pareto boundary and Pareto optimal points of $\mathcal{M}$.

\begin{theorem}
	\label{Theo_ParetoBound}
	The MOOP \eqref{MOOP} of the considered joint unicast and multigroup multicast massive MIMO system, does not have any weak Pareto optimal points and its strong Pareto boundary is analytically described by
	\vspace{-0.5em}
	\begin{align}
	\mathcal{B}_{s} =  \{  (O_{mu}^{*}(P_{un}), O_{un}^{*}(P_{mu})) \; | \;  P_{mu} + P_{un} = P, \notag \\ 
	0 \leq P_{mu} \leq P ,  0 \leq P_{un} \leq P \}.
	\end{align}
	Moreover, $(O_{mu}^{*}(P_{un}), O_{un}^{*}(P_{mu})) \in \mathcal{B}_{s}$ is achieved when \newline $( \{ q_{j}^{dl*} \}, \{ q_{jk}^{up*} \},  \{ p_{m}^{dl*} \}, \{ p_{m}^{up*} \}, \tau^{*} )$ are obtained either from Theorems \ref{multicastOpt} and \ref{unicastOpt}.
\end{theorem}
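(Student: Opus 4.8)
The plan is to characterize the Pareto boundary by showing that a point is (weakly or strongly) Pareto optimal if and only if it uses the full power budget, sets $\tau = U\!+\!G$, and allocates its remaining degrees of freedom optimally within each service, so that it necessarily coincides with a point of $\mathcal{B}_s$ generated by Theorems~\ref{multicastOpt} and~\ref{unicastOpt}. Concretely I would prove: (i) every feasible $\mathbf{x}\in\mathcal{X}$ is dominated, coordinatewise, by a single point of $\mathcal{B}_s$, which forces every Pareto optimal point onto $\mathcal{B}_s$; and (ii) the map $\Phi(P_{un}) := (O_{mu}^*(P_{un}), O_{un}^*(P-P_{un}))$ tracing out $\mathcal{B}_s$ is \emph{strictly} monotone in each coordinate, which makes every point of $\mathcal{B}_s$ strongly Pareto optimal and simultaneously rules out any purely weak Pareto optimal point, i.e. $\mathcal{B}_w=\mathcal{B}_s$.

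The core of step (i), and the part I expect to be the main obstacle, is a power-inflation lemma: any operating point that does not spend the full budget is strictly dominated. This is not obvious a priori, since spending the leftover power raises the interference term of \emph{every} user. Let $\mathbf{x}\in\mathcal{X}$ have total downlink power $S:=P_{un}+P_{mu}$ with $0<S<P$ (the case $S=0$ is trivial). Scaling every downlink power by the factor $P/S>1$, while leaving the uplink energies $\tau p_m^{up},\tau q_{jk}^{up}$ and $\tau$ untouched, keeps the point in $\mathcal{X}$ and multiplies every numerator of \eqref{SINR.MRT.un} and \eqref{SINR.MRT.mu} by $P/S$, while changing the denominators from $1+\beta_m S$ and $1+\eta_{jk}S$ to $1+\beta_m P$ and $1+\eta_{jk}P$. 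The key elementary inequality is
\[ \frac{P/S}{1+\beta_m P} > \frac{1}{1+\beta_m S} \iff P(1+\beta_m S) > S(1+\beta_m P) \iff P>S, \]
which holds because the additive noise term \mbox{``$1$''} is diluted faster than the signal power grows; the same holds with $\eta_{jk}$ in place of $\beta_m$. Hence every unicast and multicast SINR strictly increases, so both $O_{mu}$ and $O_{un}$ strictly increase, and no point with $S<P$ can even be weakly Pareto optimal.

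To finish step (i), I would combine this with two bounds that hold for any $\mathbf{x}$. First, since $\vartheta_m,\xi_{jk},\gamma_j$ depend on the pilots only through the energies $\tau p_m^{up},\tau q_{jk}^{up}$, reducing $\tau$ to $U\!+\!G$ while holding these energies and all downlink powers fixed leaves every SINR unchanged yet strictly raises the prelog $1-\tau/T$ in \eqref{SE.MRT.un} and \eqref{SE.MRT.mu}, so any Pareto point must have $\tau=U\!+\!G$. Second, fixing the unicast power to $P_{un}(\mathbf{x})$ shows that the multicast variables of $\mathbf{x}$ constitute a feasible candidate for $\mathcal{P}1$, whence $O_{mu}(\mathbf{x})\le O_{mu}^*(P_{un}(\mathbf{x}))$, and symmetrically $O_{un}(\mathbf{x})\le O_{un}^*(P_{mu}(\mathbf{x}))$. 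Applying the scaling lemma to pass to a full-power point $\tilde{\mathbf{x}}$ (with $P_{mu}(\tilde{\mathbf{x}})=P-P_{un}(\tilde{\mathbf{x}})$) makes these two bounds land on the \emph{same} parameter $t=P_{un}(\tilde{\mathbf{x}})$, so $\mathbf{x}$ is dominated by the single boundary point $\Phi(t)\in\mathcal{B}_s$. Consequently any Pareto optimal point must equal its dominator and therefore lies in $\mathcal{B}_s$.

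For step (ii) I would read off strict monotonicity directly from Theorems~\ref{multicastOpt} and~\ref{unicastOpt}: by \eqref{O1}, $O_{mu}^*(P_{un})$ is strictly increasing in $P_{mu}=P-P_{un}$, hence strictly decreasing in $P_{un}$; by \eqref{O2} together with the waterfilling solution \eqref{P3Sol}, $O_{un}^*(P_{mu})$ is a strictly decreasing function of $P_{mu}$, hence strictly increasing in $P_{un}$. Thus along $\mathcal{B}_s$ one coordinate strictly rises while the other strictly falls, so $\Phi(t_1)\ge\Phi(t_0)$ coordinatewise forces $t_1=t_0$. Since step (i) shows every feasible point is coordinatewise dominated by some $\Phi(t)$, any putative dominator of a boundary point $\Phi(t_0)$ would yield $\Phi(t_1)\ge\Phi(t_0)$ with a strict inequality, a contradiction; hence each element of $\mathcal{B}_s$ is strongly Pareto optimal. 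The same strict monotonicity leaves no horizontal or vertical segment on the boundary, so there is no point that is weakly but not strongly Pareto optimal, giving $\mathcal{B}_w=\mathcal{B}_s$ and completing the proof.
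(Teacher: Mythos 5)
The paper itself states Theorem \ref{Theo_ParetoBound} without proof (only Theorem \ref{convexity} carries a written proof), so your attempt can only be judged against the machinery the paper sets up. Your route --- reduce everything to Theorems \ref{multicastOpt} and \ref{unicastOpt}, show that any feasible $\mathbf{x}$ is coordinatewise dominated by a full-power boundary point $\Phi(t)=(O_{mu}^{*}(t),O_{un}^{*}(P-t))$, and then use strict monotonicity of both coordinates in $t$ to conclude that every point of $\mathcal{B}_s$ is strongly Pareto optimal and that no weak-only points exist --- is exactly the argument this framework is built to deliver, and it mirrors the style of the paper's proof of Theorem \ref{convexity} (domination via Theorems \ref{multicastOpt} and \ref{unicastOpt}, plus continuity of \eqref{O1} and \eqref{O2}). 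Your power-inflation inequality $P(1+\beta_m S)>S(1+\beta_m P)\iff P>S$ is correct and is indeed the key reason full power use is necessary despite the increased interference; note also that your argument implicitly uses (correctly) that $O_{mu}^{*}(t)$ and $O_{un}^{*}(P-t)$ are \emph{simultaneously} attainable, which holds because the two subproblems decouple once the total power equals $P$ and $\tau=U+G$ in both.

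Two steps need repair before the proof is complete. First, the claim that after scaling by $P/S$ ``every unicast and multicast SINR strictly increases, so both $O_{mu}$ and $O_{un}$ strictly increase'' fails in degenerate cases: a user or group with zero downlink power keeps SINR equal to zero, so if some group has $q_j^{dl}=0$ then $O_{mu}=0$ before and after scaling, and if all $p_m^{dl}=0$ then $O_{un}$ stays $0$. Your domination chain survives because it only needs the weak inequalities $O_{mu}(\tilde{\mathbf{x}})\geq O_{mu}(\mathbf{x})$ and $O_{un}(\tilde{\mathbf{x}})\geq O_{un}(\mathbf{x})$, which hold unconditionally; but the standalone assertion ``no point with $S<P$ can even be weakly Pareto optimal'' does not follow from scaling alone --- strictly improving both objectives when some powers are zero requires redistributing the spare power with a small-perturbation argument, since pouring power into idle groups raises everyone else's interference. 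Second, ``strict monotonicity leaves no horizontal or vertical segment, hence $\mathcal{B}_w=\mathcal{B}_s$'' is the right geometric picture but is not yet a proof. The rigorous version: any weak Pareto point $\mathbf{z}$ is dominated by some $\Phi(t_0)$, and at least one coordinate must be tight (otherwise $\Phi(t_0)$ strictly dominates $\mathbf{z}$ in both, contradicting weak optimality); if the other coordinate is strictly smaller, continuity of \eqref{O1} and \eqref{O2} lets you choose $t'$ near $t_0$ (on the appropriate side) so that $\Phi(t')$ beats $\mathbf{z}$ strictly in both coordinates, again a contradiction, while the endpoint cases $t_0\in\{0,P\}$ are excluded by non-negativity of the SEs. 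With both coordinates tight, $\mathbf{z}$ maps onto $\Phi(t_0)\in\mathcal{B}_s$, which you already showed is strongly Pareto optimal. These are repairs of detail, not of the idea; with them your argument is correct.
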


\vspace{-0.7em}
Theorem \ref{Theo_ParetoBound} is important due to the following reasons. First, it describes all the Pareto optimal points that can be obtained in the considered system. As each Pareto optimal point describes a particular trade-off between the two objectives (SSE and MMF) of the system, it elaborates the set of efficient operating points from which the network designer can select one based on network requirements. Second, Theorem \ref{Theo_ParetoBound} not only describes the Pareto boundary points, but also determines the exact values of the system parameters (uplink training powers, downlink transmission powers, and the pilot length) to achieve such points. So Theorem \ref{Theo_ParetoBound} exactly describes, in a joint unicast and multi-group multicast massive MIMO system, what can be achieved and how it can be achieved.

\begin{theorem}
	\label{convexity}
	$\mathcal{S}$, the attainable set of MOOP $\mathcal{M}$, is convex.
\end{theorem}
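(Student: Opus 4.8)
The plan is to show that $\mathcal{S}$ coincides with the region enclosed by the two coordinate axes and the strong Pareto boundary $\mathcal{B}_s$, and that this boundary, viewed as a curve $O_{un}=h(O_{mu})$, is concave; the region below a concave curve over a convex interval is convex, which gives the claim. Concretely, I would establish two facts: (i) $\mathcal{B}_s$ is the graph of a concave, decreasing function $h$, and (ii) $\mathcal{S}$ is exactly the hypograph $\{(a,b): 0\le a\le a_{\max},\, 0\le b\le h(a)\}$, and then invoke that such a hypograph is convex.

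For (i), the key observation is that every point of $\mathcal{B}_s$ uses the full power budget $P_{un}+P_{mu}=P$ (Theorem~\ref{Theo_ParetoBound}), so the interference terms $1+\beta_m P$ and $1+\eta_{jk}P$ appearing in \eqref{SINR.MRT.un}--\eqref{SINR.MRT.mu} are \emph{constant} along the boundary. Consequently, by \eqref{O1}, $O_{mu}^*$ is a function $A(P_{mu})$ of $P_{mu}$ alone that is a logarithm of an affine function, hence strictly concave and increasing; and by Theorem~\ref{unicastOpt}, $O_{un}^*$ is the optimal value $B(P_{un})$ of a weighted-sum-rate water-filling problem with budget $P_{un}$ and fixed interference, which is likewise concave and increasing in $P_{un}$. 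Parametrizing $\mathcal{B}_s$ by $P_{mu}\in[0,P]$ with $P_{un}=P-P_{mu}$, so that the boundary is $(A(P_{mu}),B(P-P_{mu}))$, and writing $O_{un}=h(O_{mu})$ via the strictly increasing map $A$, a direct computation gives the slope $h'=-B'(P-P_{mu})/A'(P_{mu})$. As $P_{mu}$ increases, $A'$ decreases and $B'(P-P_{mu})$ increases (both by concavity), so $h'$ is monotonically decreasing; hence $h$ is concave.

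The harder part is (ii), and in particular the inclusion ``hypograph $\subseteq\mathcal{S}$'' (comprehensiveness), which is delicate precisely because the shared interference couples the two objectives: the total power enters the denominator of every SINR, so one cannot lower one objective by shrinking its power without simultaneously relaxing the interference seen by the other. The inclusion $\mathcal{S}\subseteq$ hypograph is the easy direction and follows from Theorem~\ref{Theo_ParetoBound} together with compactness of $\mathcal{X}$: $\mathcal{S}$ is compact, so every attainable tuple is dominated by a non-dominated (Pareto-optimal) tuple, and all such tuples lie on $\mathcal{B}_s$. For the reverse inclusion I would argue by a scaling-and-sweeping argument. First, scaling all downlink powers by a common factor $\rho>1$ strictly increases every SINR in \eqref{SINR.MRT.un}--\eqref{SINR.MRT.mu}, since $\rho\,c/(1+\rho I)>c/(1+I)$; therefore, for any total power $I<P$, the optimal frontier obtained at budget $I$ is dominated by, i.e. lies in the hypograph of, $\mathcal{B}_s$. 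Second, as $I$ ranges continuously over $[0,P]$ these frontiers are nested, degenerate to the origin at $I=0$, and coincide with $\mathcal{B}_s$ at $I=P$; by continuity of the objectives in the power variables they sweep out the entire region between the origin and $\mathcal{B}_s$, which is exactly the hypograph of $h$. This establishes comprehensiveness and hence (ii).

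Combining (i) and (ii), $\mathcal{S}$ is the hypograph of a concave function over the interval $[0,a_{\max}]$ intersected with the first quadrant, and is therefore convex, proving the theorem. I expect the main obstacle to be making the sweeping argument in (ii) fully rigorous, i.e. verifying that the nested frontiers cover the region with no gaps, for which the scaling monotonicity of the SINRs and the continuity of the objectives in the power variables are the essential ingredients.
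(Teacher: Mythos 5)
Your proof is correct in substance and takes a genuinely different route from the paper's. The paper verifies the definition of convexity directly: it takes two attainable tuples, forms their convex combination, invokes Theorems~\ref{multicastOpt} and~\ref{unicastOpt} (together with the implicit concavity of \eqref{O1} and \eqref{O2} in the power split) to exhibit a single full-power Pareto point at the averaged split that dominates the combination, and then argues the combination itself is attainable ``by reducing $P_{mu}$ and $\{p_{m}^{dl}\}$'' using continuity. You instead characterize $\mathcal{S}$ globally as the hypograph of a concave, decreasing frontier and read off convexity from that geometry. Beyond organization, your proposal contains one ingredient the paper's proof lacks and actually needs: the scaling observation $\rho c/(1+\rho I)>c/(1+I)$ for $\rho>1$. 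A point $\mathbf{x}\in\mathcal{X}$ that does not spend the full budget sees interference $P_{mu}(\mathbf{x})+P_{un}(\mathbf{x})<P$ in \eqref{SINR.MRT.un}--\eqref{SINR.MRT.mu}, so its objectives are \emph{not} bounded by the fixed-interference optima of Theorems~\ref{multicastOpt}--\ref{unicastOpt} at its own split; for instance, with one group, $P_{un}(\mathbf{x})=0$ and $P_{mu}(\mathbf{x})=P/2$, one gets $O_{mu}(\mathbf{x})>O_{mu}^{*}(P-P/2)$, so the paper's step ``$O_{mu}^{*}\bigl(P-P_{mu}(\alpha)\bigr)\geq O_{mu}(\alpha)$'' fails as literally stated. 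Scaling up to full power first (which by your lemma improves both objectives) and only then comparing with \eqref{O1}--\eqref{O2} repairs this; your route does it automatically. What the paper's route buys is brevity: it never needs $\mathcal{S}$ to equal the whole hypograph, only attainability of each particular convex combination. What your route buys is a stronger structural description of $\mathcal{S}$ and an easy inclusion that does not lean on Theorem~\ref{Theo_ParetoBound}.

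Two refinements you should make. First, both your sweep and the paper's ``reduction'' gloss the same delicate step, which you rightly flag: exact attainment of a dominated target $(a,b)$. Since the objectives are coupled through shared interference (lowering multicast power raises every unicast SINR and vice versa), a one-dimensional sweep is not enough; a clean fix is a nested intermediate-value argument: scale unicast powers by $s\in[0,1]$, for each $s$ choose the multicast scaling $t(s)$ (unique by the strict monotonicity your lemma provides) with $O_{mu}\bigl(t(s),s\bigr)=a$, note $t(\cdot)$ is continuous, and apply the intermediate value theorem to $s\mapsto O_{un}\bigl(t(s),s\bigr)$, which is $0$ at $s=0$ and at least $b$ at $s=1$. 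Second, in part (i) you differentiate the water-filling value $B$; differentiability is unnecessary, since the componentwise bound $\bigl(\lambda A(t_{1})+(1-\lambda)A(t_{2}),\,\lambda B(P-t_{1})+(1-\lambda)B(P-t_{2})\bigr)\leq\bigl(A(t_{\lambda}),B(P-t_{\lambda})\bigr)$ with $t_{\lambda}=\lambda t_{1}+(1-\lambda)t_{2}$ follows from concavity of $A$ and $B$ alone and already yields convexity of the hypograph.
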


\begin{proof}
	Consider any two arbitrary tuples $\big(\! O_{mu}(\mathbf{x}), O_{un}(\mathbf{x}) \! \big) \!\! \in \! \mathcal{S}$ and $\big( O_{mu}(\mathbf{y}),\! O_{un}(\mathbf{y})  \big) \! \!\in \mathcal{S}$. Define
	\begin{align}
	\big( O_{mu}(\mathbf{\alpha}), O_{un}(\mathbf{\alpha})  \big) \!\! =& \alpha \big( O_{mu}(\mathbf{x}), O_{un}(\mathbf{x})  \big)  \\+& (1- \alpha) \big( O_{mu}(\mathbf{y}), O_{un}(\mathbf{y})  \big)  \;\;\; \alpha \in [0,1]. \notag
	\end{align}
	Now we show that $\big( O_{mu}(\mathbf{\alpha}),\! O_{un}(\mathbf{\alpha})  \big)\! \in \! \mathcal{S}$. Define $P_{mu}(\mathbf{x}) = \sum_{j=\!1}^{G}\! q_{j}^{dl} (\mathbf{x})$, $P_{un}\!(\mathbf{x})\! =\! \sum_{m=1}^{U} \! p_{m}^{dl} (\mathbf{x})$, $P_{mu}(\mathbf{y}) \!=\! \sum_{j=\!1}^{G} \!q_{j}^{dl} (\mathbf{y})$, and $P_{un}(\mathbf{y}) = \sum_{m=1}^{U} p_{m}^{dl} (\mathbf{y})$. As $\mathbf{x}, \mathbf{y} \in \mathcal{X}$, $P_{mu}(\mathbf{x}) + P_{un}(\mathbf{x}) \leq P$ and $P_{mu}(\mathbf{y}) + P_{un}(\mathbf{y}) \leq P$. Based on Theorem \ref{multicastOpt}, using $P_{mu}(\alpha) = \alpha P_{mu}(\mathbf{x}) + (1-\alpha)P_{mu}(\mathbf{y})$ for multicast transmission, we have $O^{*}_{mu}\big(P - P_{mu}(\alpha)\big) \geq O_{mu}(\alpha)$. As $P - P_{mu}(\alpha) \geq P_{un}(\alpha)$, based on Theorem \ref{unicastOpt}, using $P - P_{mu}(\alpha)$ for unicast transmission, we have $O^{*}_{un}\big(P_{mu}(\alpha)\big) \geq O_{un}(\alpha)$. Therefore $\forall \alpha \in [0,1]$, $\exists \big(O^{*}_{mu}, O^{*}_{un}\big) \in \mathcal{S} | \big(O^{*}_{mu}, O^{*}_{un}\big) \geq \big(O_{mu}(\alpha), O_{un}(\alpha)\big)$. As \eqref{O1} and \eqref{O2} are continuous, we can obtain $\big(O_{mu}(\alpha), O_{un}(\alpha)\big)$ by reducing $P_{mu}$ and $\{p_{m}^{dl}\}$ in \eqref{O1} and \eqref{O2} (or \eqref{O1} and \eqref{O2}), which completes the proof.
\end{proof}
\vspace{-0.35em}
Theorem \ref{convexity} proves that the Pareto region is convex. Therefore, it shows that any point of the Pareto region can be obtained by spatial multiplexing of unicast and multicast UTs, and there is no need for orthogonal time or frequency sharing.

\vspace{-0.5em}
\section{Numerical Results}
\vspace{-0.5em}
In this section, we use numerical simulations to verify Theorems~\ref{Theo_ParetoBound} and \ref{convexity}. We consider a system with $U=20$, $G=10$, and $K_{g} = 100$ $\forall g$, i.e., we have $1020$ UTs. The cell radius is considered to be $500$ meters and the unicast and multicast UTs are randomly and uniformly distributed in the cell excluding an inner circle of radius $35$ meters. The large-scale fading coefficient for the multicast UT $k$ in group $j$ is modeled as $\beta_{jk}^{m} = \bar{d}/x_{jk}^{\nu}$, where $\nu=3.76$ is the path-loss exponent, $x_{jk}$ is the distance between the UT and the BS, and $\bar{d} = 10^{-3.5}$ is a constant that regulates the channel attenuation at $35$ meters \cite{3GPPmodel}. Similarly, the large-scale fading coefficient for the unicast UT $m$ is modeled as $\beta_{m}=\bar{d}/x_{m}^{\nu}$, where $x_{m}$ is the distance between this UT and the BS.

The transmission bandwidth is assumed to be $W=20$ MHz, the coherence bandwidth and coherence time are considered to be $200$ kHz and $1$ ms, which results in a coherence interval of length $200$ symbols \cite{yang2013performance}. The noise power spectral density is considered to be $\sigma^{2} = -174$ dBm/Hz. The total downlink power $\bar{P}$ is set to $10$ Watts and its corresponding normalized value is $P = \bar{P} / (W \cdot \sigma^{2})$. We assume $E_{m}$ and $E_{jk}$ are equal to $2$ micro-joule $\forall \; m,j,k$.  

Fig.~\ref{Fig1} presents the Pareto region and the Pareto boundary of the MOOP $\mathcal{M}$, for different values of $N$. For each value of $N$, the ratio $P_{un} / P$ is changed from 0 to 1 with steps of $P/20$, which are denoted by $21$ marker points. First, note that the Pareto region for all the considered values of $N$ is convex, as shown in Theorem~\ref{convexity}. Second, the Pareto boundary does not have any weak Pareto optimal points, refer to the Theorem~\ref{Theo_ParetoBound}. Third, as illustrated by the radial lines (the red, black, and blue radial lines which are presenting a specific ratio between $P_{un}$ and $P_{mu}$), adding more antennas improves the objective of both unicast and multicast transmissions. This is due to the improved coherent transmission of signals and it is obtained by employing a large-scale antenna array.

\begin{figure}[]
	\centering
	\includegraphics[width=.99\columnwidth, trim={3cm 6.5cm 1.8cm 6.5cm},clip]{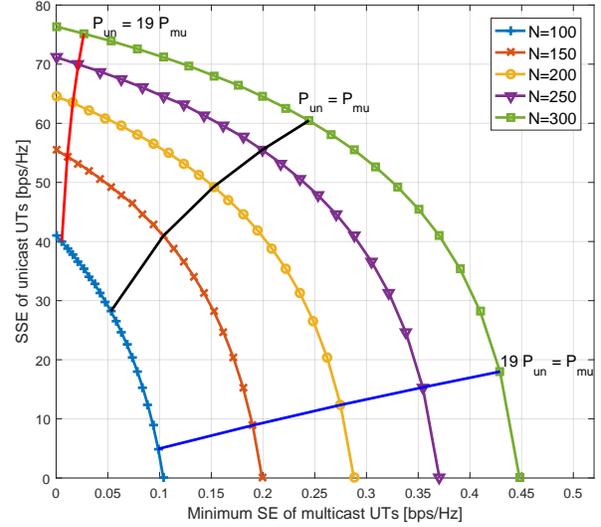}
	\caption{Pareto Region of MOOP $\mathcal{M}$.}
	\label{Fig1}
	\vspace{-0.5em}
\end{figure}

\vspace{-0.5em}
\section{Conclusion}
\vspace{-0.5em}
We studied joint unicast and multigroup multicast transmission in massive MIMO systems.  In these systems, as we have two conflicting objectives, it is challenging to define a notion of optimality. Therefore, we presented a multiobjective optimization framework and proposed a definition of optimality for these systems based on their Pareto region. We expressed the Pareto boundary analytically and proved the Pareto region is convex. Therefore, these systems should use spatial multiplexing to serve both unicast and multicast UTs simultaneously.


\bibliographystyle{IEEEtran}
\bibliography{IEEEabrv,ICASSPJoint}

\end{document}